\newtheorem{theorem}{Theorem}
\newtheorem{remark}{Remark}
\let\oldnorm\norm   
\let\norm\undefined 
\DeclarePairedDelimiter\norm{\lVert}{\rVert}
\let\abs\undefined 
\DeclarePairedDelimiter\abs{\lvert}{\rvert}
\title{A DNN based Normalized Time-frequency Weighted Criterion for Robust Wideband DoA Estimation}
\name{Kuan-Lin Chen$^1$, Ching-Hua Lee$^1$, Bhaskar D. Rao$^1$, and Harinath Garudadri$^2$\thanks{This work was supported in part by NIH/NIDCD under Grant R01DC015436 and in part by NSF/IIS under Grant 1838897.}}
\address{$^1$Department of Electrical and Computer Engineering, University of California, San Diego\\$^2$Qualcomm Institute, University of California, San Diego}
\begin{document}
\ninept

\maketitle

\begin{abstract}
Deep neural networks (DNNs) have greatly benefited direction of arrival (DoA) estimation methods for speech source localization in noisy environments. However, their localization accuracy is still far from satisfactory due to the vulnerability to nonspeech interference. To improve the robustness against interference, we propose a DNN based normalized time-frequency (T-F) weighted criterion which minimizes the distance between the candidate steering vectors and the filtered snapshots in the T-F domain. Our method requires no eigendecomposition and uses a simple normalization to prevent the optimization objective from being misled by noisy filtered snapshots. We also study different designs of T-F weights guided by a DNN. We find that duplicating the Hadamard product of speech ratio masks is highly effective and better than other techniques such as direct masking and taking the mean in the proposed approach. However, the best-performing design of T-F weights is criterion-dependent in general. Experiments show that the proposed method outperforms popular DNN based DoA estimation methods including widely used subspace methods in noisy and reverberant environments.
\end{abstract}

\begin{keywords}
Speech source localization, direction of arrival, spatial covariance matrix, deep neural networks, array processing
\end{keywords}

\section{Introduction}
One of the goals in speech source localization for many applications such as hearing aids \cite{pisha2019wearable} and augmented hearing systems \cite{pisha2018wearable} is to estimate the direction of arrival (DoA) of speech signals using microphone arrays. Building upon the foundation of narrowband DoA estimation, the most widely used wideband methods are based on a coherent combination of spatial spectra \cite{mohan2003localization,mohan2008localization} or spatial covariance matrices (SCMs) \cite{wang1985coherent,hung1988focussing} at different frequencies to improve the robustness of DoA estimates against noise.
For example, the wideband MUltiple SIgnal Classification (MUSIC) \cite{schmidt1986multiple} and its variants \cite{van2004optimum,azimi2008wideband}.
However, the performance of these conventional methods degrades rapidly with a lower signal-to-noise ratio (SNR) or signal-to-interference ratio (SIR). Without reducing the noise and interference components from the \textit{snapshots} (see Section \ref{sec: problem formulation}), it is difficult to improve the estimation of SCMs of the speech for better performance.

The advent of deep learning has opened up many opportunities for conventional approaches \cite{chen2021resnests}. With DNNs, rich T-F patterns in speech and nonspeech are learned to remove noisy components from microphone measurements \cite{wang2018supervised}.
For example, in \cite{xu2017weighted,yang2017multiple,wang2018robust,yang2019multiple}, T-F weights predicted from DNNs are assigned to snapshots to better estimate speech SCMs, resulting in the so-called weighted SCMs (WSCMs). When a snapshot is more noisy, a smaller T-F weight is assigned to de-emphasize its contribution to a speech SCM estimate. Because these WSCMs contain less interference and noise components, they are more accurate speech SCMs achieving superior performance for conventional DoA estimation approaches such as wideband MUSIC and its variants.
Many other applications such as acoustic beamforming, speech enhancement, and speech recognition \cite{heymann2016neural,erdogan2016improved,xiao2017time,ochiai2017unified,wang2018all,wang2018robust_interspeech,pfeifenberger2019eigenvector} have adopted such a framework to improve the estimation of SCMs for downstream tasks.
\begin{figure}[t]
\begin{minipage}[b]{0.158\textwidth}
  \centering
  \centerline{\includegraphics[width=0.88\linewidth]{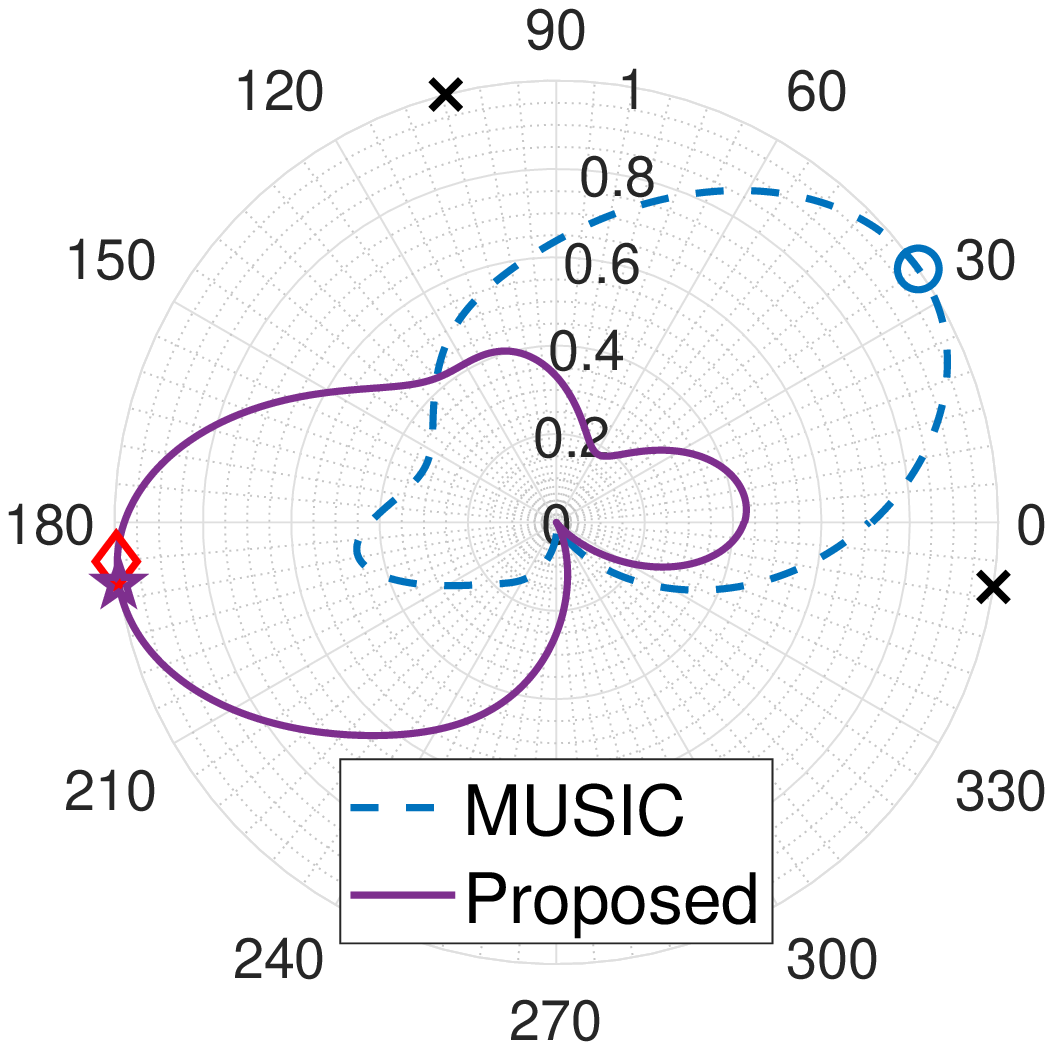}}\smallskip
  \centerline{(a) $\text{SIR}=-6$ dB.}
  \label{fig:spatial_spectrum_2}
\end{minipage}
    \begin{minipage}[b]{0.158\textwidth}
  \centering
  \centerline{\includegraphics[width=0.88\linewidth]{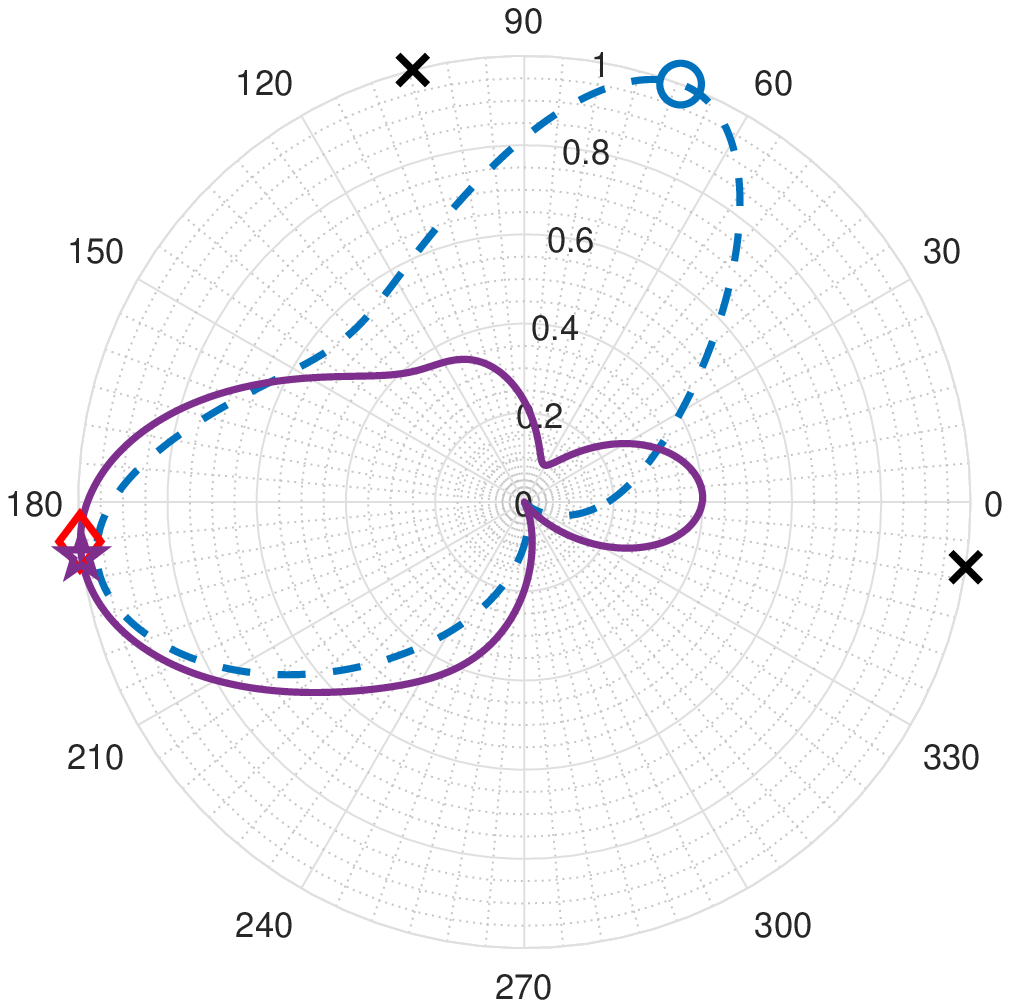}}\smallskip
  \centerline{(b) $\text{SIR}=0$ dB.}
  \label{fig:spatial_spectrum_3}
\end{minipage}
    \begin{minipage}[b]{0.158\textwidth}
  \centering
  \centerline{\includegraphics[width=0.88\linewidth]{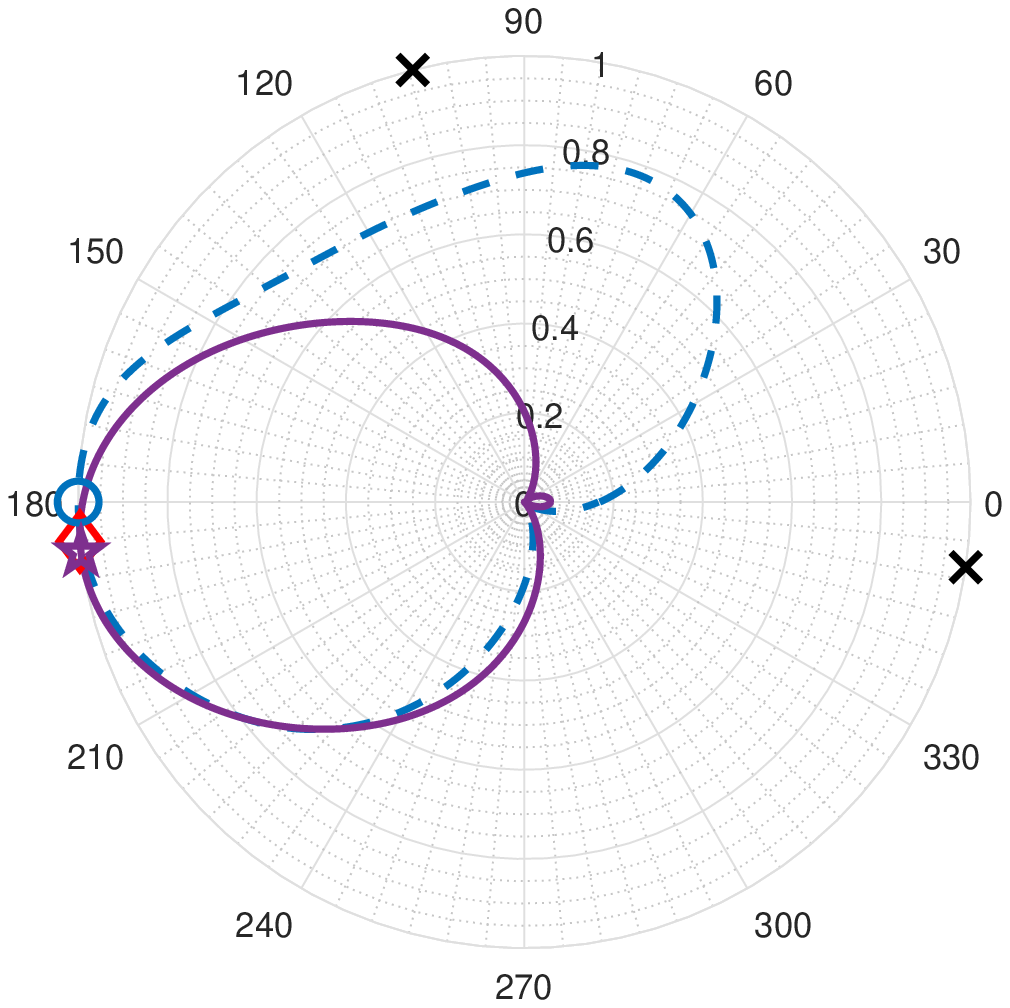}}\smallskip
  \centerline{(c) $\text{SIR}=20$ dB.}
  \label{fig:spatial_spectrum_4}
\end{minipage}
\caption{Normalized spatial pseudo-spectra. {\color{red}{$\diamondsuit$}} and $\boldsymbol{\times}$ represent the speaker and interference, respectively. The proposed method (\ref{eq:proposed_normalized_method}) is more robust than the DNN based MUSIC (\ref{eq:music_optimization}) in a wide range of SIRs.}
\label{fig:spatial_spectrum}
\end{figure}

While recent progress in the above works has demonstrated better estimation of speech SCMs using DNNs, we also find that most of them are still vulnerable to nonspeech interference, even though it is weaker than the target speech. For example, the two popular subspace approaches including the WSCM based MUSIC \cite{xu2017weighted} and principal vector method \cite{yang2017multiple,wang2018robust,yang2019multiple},
where the latter has been widely used in many neural beamforming works.
Because these methods use normalized eigenvectors of WSCMs to determine the spatial spectra at different frequencies, they lose the information encoded by eigenvalues that are known to highly correlate with SNR and SIR.
Furthermore, these methods rely on the assumption that the power of the target signal component is larger than the noise or interference, which is unlikely to be true for all frequency components even for an interference that is weaker than speech.
These robustness issues in the WSCM based subspace methods make them less attractive compared to other DNN based methods such as \cite{pertila2017robust,wang2018target}. On the other hand, a comparative study of the best-performing design of T-F weights for different algorithms still remains largely lacking.

In this paper, we propose a \textit{normalized T-F weighted criterion} and study different designs of the T-F weights guided by a DNN for robust wideband DoA estimation. Our criterion exploits all T-F weights and is more robust to noise and interference with less computational complexity as compared to existing WSCM based approaches that rely on subspaces and eigendecomposition. The proposed method can be applied to arbitrary array geometries, and the DNN used to guide T-F weights is independent of the microphone array used. The training data in our methods are easy to obtain because only single-channel speech and nonspeech corpora are required for training. Experiments show that the proposed normalized T-F weighted method outperforms the DNN based subspace approaches including the popular MUSIC and principal vector method, and a DNN based non-subspace method using the steered response power (SRP), under different types and levels of interference in noisy and reverberant environments.

\section{The Signal Model} \label{sec: problem formulation}
Let the received signal at the $M$-element microphone array in the short-time Fourier transform (STFT) domain be $\mathbf{y}(t,f)=\begin{bmatrix}y_1(t,f)&y_2(t,f)&\cdots&y_M(t,f)\end{bmatrix}^\mathsf{T}\in\mathbb{C}^{M}$ for all $(t,f)$ with the frequency bin index $f\in\{1,2,\cdots,F\}$ and time frame index $t\in\{1,2,\cdots,T\}$ where
$
        y_m(t,f)=\sum_{i=1}^Ir_{m,s_i}(f)s_i(t,f)+\sum_{k=1}^Kr_{m,h_k}(f)h_k(t,f)+n_m(t,f)
$
for $m=1,2,\cdots,M$. We have used $s_i(t,f)\in\mathbb{C}$ to denote the STFT of the time-domain target speech signal $s_i(\tau)$, $h_k(t,f)\in\mathbb{C}$ to denote the STFT of the time-domain nonspeech interference $h_k(\tau)$, and $n_m(t,f)\in\mathbb{C}$ to denote the STFT of the time-domain additive noise on the $m$-th microphone. $I$ and $K$ are nonnegative integers representing the number of speech sources and interference sources, respectively. $r_{m,s_i}(f)\in\mathbb{C}$ denotes the acoustic transfer function (ATF) between the location of the $i$-th target and microphone $m$. $r_{m,h_k}(f)\in\mathbb{C}$ denotes the ATF between the location of the $k$-th interference and microphone $m$. The vector $\mathbf{y}(t,f)$ is referred to as the \textit{snapshot} in the literature. $\odot$ denotes the Hadamard product. Let $[M]=\{1,2,\cdots,M\}$.
\section{Optimization Criteria in Prior Work}
In this section, we review two popular wideband DoA estimation approaches and point out their issues.
Define
$\mathbf{w}(t,f)=\begin{bmatrix}w_1(t,f)&w_2(t,f)&\cdots&w_M(t,f)\end{bmatrix}^{\mathsf{T}}$ where $w_m(t,f)$ is an estimate of the ideal ratio mask (IRM) for the $m$-th microphone. Under the noise subspace assumption, the popular DNN based MUSIC \cite{xu2017weighted} finds the DoA by solving the following optimization problem
\begin{equation} \label{eq:music_optimization}
    \max_{\theta}\ \ \ \sum_{f}\frac{1}{\mathbf{v}^{\mathsf{H}}(\theta,f)\mathbf{N}(f)\mathbf{N}^{\mathsf{H}}(f)\mathbf{v}(\theta,f)}
\end{equation}
where $\mathbf{N}(f)$ represents the noise subspace formed by the eigenvectors corresponding to the $M-1$ smallest eigenvalues of the WSCM
\begin{equation}
    \boldsymbol{\Phi}(f)=\sum_{t}\left[\mathbf{w}(t,f)\odot\mathbf{y}(t,f)\right]\left[\mathbf{w}(t,f)\odot\mathbf{y}(t,f)\right]^{\mathsf{H}}.
\end{equation}
The DNN-guided T-F weights
$
    \mathbf{w}(t,f)
$
are introduced to filter the snapshot at every T-F bin. The weights $w_m(t,f)$ are predicted by a DNN model followed by a post-processing technique before solving the optimization problem for $\theta$. We describe the prediction and learning of these T-F weights in Section \ref{subsection:t-f-weights}. We have used $\mathbf{v}(\theta,f)\in\mathbb{C}^{M}$ to denote the array manifold (or steering vector) at DoA $\theta$ and frequency corresponding to the bin index $f$ under the plane wave assumption \cite{van2004optimum}.
Another popular approach \cite{yang2017multiple,wang2018robust,yang2019multiple} is the principal vector method relying on the signal subspace, which finds the DoA by solving the following optimization problem
\begin{equation} \label{eq:principal_vector_optimization}
    \max_{\theta}\ \ \ \sum_{f}\mathbf{v}^{\mathsf{H}}(\theta,f)\mathbf{p}(f)\mathbf{p}^{\mathsf{H}}(f)\mathbf{v}(\theta,f)
\end{equation}
where $\mathbf{p}(f)$ is the principal eigenvector of the WSCM $\boldsymbol{\Phi}(f)$. (\ref{eq:music_optimization}) and (\ref{eq:principal_vector_optimization}) are common subspace approaches. Remark \ref{remark:power_assumption} and \ref{remark:losing_weights} point out their weaknesses. In terms of computational complexity, computing the signal or noise subspace requires performing eigenvalue decomposition which is computationally expensive when $M$ is large.
\begin{remark} \label{remark:power_assumption}
    The power of the speech signal is assumed to be larger than the power of undesired signals. When the interference and noise are stronger than the speech signal at some frequencies, such an assumption can create misleading spatial spectra at those frequencies.
\end{remark}
\begin{remark} \label{remark:losing_weights}
    Because $\mathbf{w}(t,f)$ suppresses the interference and noise, the principal eigenvalue of an SCM at a noisy frequency is reduced. Taking normalized eigenvectors of every WSCM from different frequencies and combining them uniformly in (\ref{eq:music_optimization}) or (\ref{eq:principal_vector_optimization}) weight every frequency equally, losing the ability to de-emphasize the interference and noise components according to the eigenvalues in the objective.
\end{remark}
\section{Proposed Methods} \label{section:weighted_criteria}
To overcome the limitations revealed by Remarks \ref{remark:power_assumption} and \ref{remark:losing_weights}, one can use a non-subspace method. The most intuitive approach is a DNN based SRP method solving the following optimization problem
\begin{equation} \label{eq:srp_method_optimization_max_cov}
    \max_{\theta}\ \ \ \sum_{f}\mathbf{v}^{\mathsf{H}}(\theta,f)\boldsymbol{\Phi}(f)\mathbf{v}(\theta,f)
\end{equation}
in which it uses the whole covariance matrix $\boldsymbol{\Phi}(f)$. Such an approach is free from the power assumption and picking eigenvectors as a signal or noise subspace. However, it heavily relies on the estimation quality of T-F weights $\mathbf{w}(t,f)$. Since (\ref{eq:srp_method_optimization_max_cov}) is not robust to imperfect $\mathbf{w}(t,f)$, it is still vulnerable to interference and noise. 
\subsection{A Robust DNN based Normalized T-F Weighted Criterion} \label{sec:normalized_weighted_criterion}
When the magnitude of the measurement $\mathbf{y}(t,f)$ is large with the corresponding $w_m(t,f)$ being not sufficiently small for some $m$, a normalization of the magnitude of $\mathbf{y}(t,f)$ can prevent the objective function from relying on a single low SNR or SIR snapshot (outlier) to a certain degree. Based on such a rationale, we first normalize the filtered snapshot at every T-F bin and then directly match a candidate steering vector to the normalized filtered snapshot, giving the following DNN based normalized T-F weighted criterion
\begin{equation} \label{eq:proposed_method_1_optimization}
    \min_{\theta,\mathbf{S}}\ \ \ \sum_{f}\sum_{t}\oldnorm{\frac{\mathbf{w}(t,f)\odot\mathbf{y}(t,f)}{\norm{\mathbf{y}(t,f)}_{2}}-s(t,f)\mathbf{v}(\theta,f)}_2^2
\end{equation}
where the optimization variable matrix $\mathbf{S}\in\mathbb{C}^{T\times F}$ is the STFT of the unknown target speech signal whose $(t,f)$ element is $s(t,f)$.
Note that it is reasonable to assume $\norm{\mathbf{y}(t,f)}_{2}>0$ but we have made a strong assumption that $I=1$.
Although $\theta$ and $\mathbf{S}$ are two different unknowns, we are only interested in $\theta$, which can be found by peak detection without knowing $\mathbf{S}$ by Theorem \ref{main_theorem}.
\begin{theorem} \label{main_theorem}
Let $\left(\theta^*,\mathbf{S}^*\right)$ be any local minimizer of (\ref{eq:proposed_method_1_optimization}) and $\tilde{\mathbf{y}}(t,f)=\mathbf{w}(t,f)\odot\mathbf{y}(t,f)$. Then, $\theta^*$ is a local maximizer of
\begin{equation} \label{eq:proposed_normalized_method}
    \begin{split}
    \max_{\theta}\ \ \ \sum_{f}\mathbf{v}^{\mathsf{H}}(\theta,f)\sum_{t}\frac{\tilde{\mathbf{y}}(t,f)\tilde{\mathbf{y}}^{\mathsf{H}}(t,f)}{\norm{\mathbf{y}(t,f)}_2^2}\mathbf{v}(\theta,f).
    \end{split}
\end{equation}
Also, any local maximizer of (\ref{eq:proposed_normalized_method}) is a $\theta^*$.
\end{theorem}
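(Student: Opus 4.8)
The plan is to treat (\ref{eq:proposed_method_1_optimization}) as a problem in which $\mathbf{S}$ is a nuisance variable that can be eliminated by inner minimization. First I would fix $\theta$ and minimize over $\mathbf{S}$. Writing $\mathbf{z}(t,f)=\tilde{\mathbf{y}}(t,f)/\norm{\mathbf{y}(t,f)}_2$ (well defined since $\norm{\mathbf{y}(t,f)}_2>0$), the objective is $J(\theta,\mathbf{S})=\sum_{f}\sum_{t}\norm{\mathbf{z}(t,f)-s(t,f)\mathbf{v}(\theta,f)}_2^2$, which for fixed $\theta$ decouples over the $TF$ scalars $s(t,f)$, each summand being a strictly convex quadratic in $s(t,f)$ because $\mathbf{v}(\theta,f)\neq\mathbf{0}$ (under the plane-wave model $\norm{\mathbf{v}(\theta,f)}_2^2=M$). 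Hence the inner minimizer is the unique least-squares coefficient $s^\star(t,f;\theta)=\mathbf{v}^{\mathsf{H}}(\theta,f)\mathbf{z}(t,f)/M$, and, since $\mathbf{v}(\theta,f)$ depends continuously (indeed smoothly) on $\theta$, so does $s^\star(t,f;\theta)$.

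Next I would substitute $s^\star$ back in. Using the Pythagorean identity $\norm{\mathbf{z}-s^\star\mathbf{v}}_2^2=\norm{\mathbf{z}}_2^2-\abs{\mathbf{v}^{\mathsf{H}}\mathbf{z}}^2/\norm{\mathbf{v}}_2^2$, the profiled objective becomes $g(\theta):=\min_{\mathbf{S}}J(\theta,\mathbf{S})=C-\frac{1}{M}\sum_{f}\mathbf{v}^{\mathsf{H}}(\theta,f)\big(\sum_{t}\mathbf{z}(t,f)\mathbf{z}^{\mathsf{H}}(t,f)\big)\mathbf{v}(\theta,f)$, where $C=\sum_{f}\sum_{t}\norm{\mathbf{z}(t,f)}_2^2$ is a finite constant independent of $\theta$ and $\mathbf{z}(t,f)\mathbf{z}^{\mathsf{H}}(t,f)=\tilde{\mathbf{y}}(t,f)\tilde{\mathbf{y}}^{\mathsf{H}}(t,f)/\norm{\mathbf{y}(t,f)}_2^2$. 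Because $C$ and $1/M$ are constants, on any subset of $\theta$-space a point minimizes $g$ if and only if it maximizes the objective of (\ref{eq:proposed_normalized_method}); in particular local minimizers of $g$ and local maximizers of (\ref{eq:proposed_normalized_method}) coincide.

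It remains to connect local minimizers of the joint problem $J(\theta,\mathbf{S})$ with local minimizers of the profiled $g(\theta)$. I would establish the standard equivalence: $(\theta^*,\mathbf{S}^*)$ is a local minimizer of $J$ if and only if $\mathbf{S}^*=s^\star(\cdot;\theta^*)$ and $\theta^*$ is a local minimizer of $g$. For the forward direction, holding $\theta=\theta^*$ makes $\mathbf{S}^*$ a local---hence, by strict convexity of $J(\theta^*,\cdot)$, the unique global---minimizer of $J(\theta^*,\cdot)$, forcing $\mathbf{S}^*=s^\star(\cdot;\theta^*)$; then for $\theta$ near $\theta^*$ the point $(\theta,s^\star(\cdot;\theta))$ lies---by the continuity established in the first step---inside the neighborhood on which $J\ge J(\theta^*,\mathbf{S}^*)$, so $g(\theta)=J(\theta,s^\star(\cdot;\theta))\ge g(\theta^*)$. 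The converse direction is immediate from $J(\theta,\mathbf{S})\ge g(\theta)\ge g(\theta^*)=J(\theta^*,\mathbf{S}^*)$ for $(\theta,\mathbf{S})$ near $(\theta^*,\mathbf{S}^*)$. Combining with the previous paragraph gives both assertions: every local minimizer $\theta^*$ of (\ref{eq:proposed_method_1_optimization}) is a local maximizer of (\ref{eq:proposed_normalized_method}), and conversely every local maximizer $\theta^*$ of (\ref{eq:proposed_normalized_method}) arises, together with $\mathbf{S}^*=s^\star(\cdot;\theta^*)$, as a local minimizer of (\ref{eq:proposed_method_1_optimization}).

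The main obstacle is this last step---the local-to-local bookkeeping---rather than any computation: one has to use strict convexity of $J(\theta^*,\cdot)$ to pin $\mathbf{S}^*$ down exactly, and continuity of $\theta\mapsto s^\star(\cdot;\theta)$ to transport a joint neighborhood of $(\theta^*,\mathbf{S}^*)$ into a neighborhood of $\theta^*$ along the graph $\{(\theta,s^\star(\cdot;\theta))\}$. The inner least-squares solve and the identification of $g$ with a constant minus (\ref{eq:proposed_normalized_method}) are routine.
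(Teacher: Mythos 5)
Your proposal is correct and follows essentially the same route as the paper: profile out $\mathbf{S}$ via the per-bin least-squares fit $s^\star(t,f;\theta)=\mathbf{v}^{\dagger}(\theta,f)\mathbf{z}(t,f)$ and observe that the resulting objective is a constant minus the quadratic form in (\ref{eq:proposed_normalized_method}), which the paper phrases with the projection $\mathbf{v}\mathbf{v}^{\dagger}$ and the cyclic property of the trace instead of your Pythagorean identity. Your explicit treatment of the local-minimizer bookkeeping (strict convexity in $\mathbf{S}$ pinning down $\mathbf{S}^*$, plus continuity of $\theta\mapsto s^\star(\cdot;\theta)$) supplies the justification for the step the paper states only as ``$\theta^*$ is equivalent to a local maximizer of'' the projected objective.
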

\begin{proof}
$\theta^*$ is equivalent to a local maximizer of the objective
$
        \sum_{f}\sum_{t}\norm{\mathbf{v}(\theta,f)\mathbf{v}^{\dagger}(\theta,f)\frac{\mathbf{w}(t,f)\odot\mathbf{y}(t,f)}{\norm{\mathbf{y}(t,f)}_2}}_2^2
$
where $\mathbf{v}^{\dagger}(\theta,f)$ is the pseudoinverse of $\mathbf{v}(\theta,f)$. The spatial pseudo-spectrum at $f$ is given by
$
        \text{tr}\left(\mathbf{A}^{\mathsf{H}}(\theta,f)\mathbf{A}(\theta,f)\sum_{t}\frac{\tilde{\mathbf{y}}(t,f)\tilde{\mathbf{y}}^{\mathsf{H}}(t,f)}{\norm{\mathbf{y}(t,f)}_2^2}\right)
$ where $\mathbf{A}(\theta,f)=\mathbf{v}(\theta,f)\mathbf{v}^{\dagger}(\theta,f)$, giving (\ref{eq:proposed_normalized_method}) with the cyclic property of the trace.
\end{proof}
\subsection{Design of T-F Weights: DNN and Post-processing} \label{subsection:t-f-weights}
To make the DNN predicting the T-F weights $\mathbf{w}(t,f)$ independent of the array geometry used in the wideband DoA estimation problem, we propose a unified framework consisting of a signal enhancement DNN model and a post-processing technique.
Let $\mathbf{W}_m$, $\mathbf{G}_m$, and $\mathbf{Y}_m$ be $T$-by-$F$ matrices whose $(t,f)$-th element are $w_m(t,f)$, $G_m(t,f)$, and $y_m(t,f)$, respectively. For all $m\in\{1,2,\cdots,M\}$, the DNN $g:\mathbb{R}^{2\times T\times F}\to\mathbb{R}^{T\times F}$ individually predicts a weight matrix $\mathbf{G}_m$ by using the raw T-F representation $\mathbf{Y}_m\in\mathbb{C}^{T\times F}$ obtained at the $m$-th microphone. Once all the weight matrices are computed, we feed them to a post-processing function $q_m:\mathbb{R}^{M\times T\times F}\to\mathbb{R}^{T\times F}$ to generate the final T-F weights $w_m(t,f)$ for the $m$-th microphone. Mathematically,
$
    \mathbf{W}_m=q_m\left(\mathbf{G}_1,\mathbf{G}_2,\cdots,\mathbf{G}_M\right)
$
where $\mathbf{G}_m$ is computed independently by $g$, i.e.,
$
    \mathbf{G}_m = g\left(\Re\{\mathbf{Y}_m\},\Im\{\mathbf{Y}_m\}\right)
$.  We have used $\Re$ and $\Im$ to extract the real and imaginary parts of a matrix, respectively. We can utilize a different post-processing for $q_m$ as shown in Table \ref{tab:post_processing_techniques}. The above composition of the post-processing and the DNN establishes the design of T-F weights. During training, only single-channel speech and nonspeech data are required and the DNN model is trained to learn the ideal ratio mask (IRM) \cite{wang2018supervised,wang2018robust}.
\begin{table}[h]
    \centering
    \caption{Examples of the post-processing function $q_m$.}
    \begin{tabular}{ll}
        \toprule
        Post-processing & Expression for all $m\in[M]$ \\
        \midrule
        Identity (direct masking) &  $q_m=\mathbf{G}_m$\\
        Minimum & $[q_m]_{t,f}=\min_{i\in[M]}[\mathbf{G}_i]_{t,f}$\\
        Maximum & $[q_m]_{t,f}=\max_{i\in[M]}[\mathbf{G}_{i}]_{t,f}$\\
        Arithmetic mean & $q_m=\frac{1}{M}\sum_{i=1}^M\mathbf{G}_i$\\
        Arithmetic median & $[q_m]_{t,f}=\text{median}(\{[\mathbf{G}_i]_{t,f}\}_{i=1}^M)$\\
        Hadamard product & $q_m=\mathbf{G}_1\odot\mathbf{G}_2\odot\cdots\odot\mathbf{G}_M$\\
        Geometric mean & $[q_m]_{t,f}=\sqrt[M]{\prod_{i=1}^M[\mathbf{G}_i]_{t,f})}$\\
        \multirow{2}{*}{Binary thresholding (BT)} & $[q_m]_{t,f}=1,$ if $[\mathbf{G}_m]_{t,f}>\beta$\\
         & $[q_m]_{t,f}=0,$ otherwise\\
        \bottomrule
    \end{tabular}
    \label{tab:post_processing_techniques}
\end{table}
\section{Experiments} \label{section:simulation_results}
The experiments are conducted at $16$ kHz with the target speech from the TIMIT dataset \cite{garofolo1993timit}, interference from the PNL $100$ nonspeech sounds \cite{hu2010tandem} (machine, water, wind, etc), and white Gaussian noise. We randomly divide the $100$ nonspeech sounds into $80$ for training and $20$ for testing. Code is available at \url{https://github.com/kjason/DnnNormTimeFreq4DoA}.

\noindent\textbf{The DNN:}
We use a U-Net \cite{ronneberger2015u} ($0.67$M parameters) to predict the IRM. The logistic sigmoid function is applied to the last layer of the U-Net.
The loss function is an absolute error loss. The batch size is $16$. All networks are trained for $200$ epochs. SGD with Nesterov momentum is used. The momentum is set to $0.9$. The weight decay is $0.0005$. The learning rate is initially set to $0.1$ and decreased by a factor of $5$ after training $60$, $120$, and $160$ epochs. The speech and interference are created by mixing $P$ clean speech files and $P$ interference files, respectively. $P\in\{1,2,3\}$ is uniformly sampled in each example and each file is uniformly sampled from the training set.
The SNR and SIR are uniformly sampled in the range of $[0,20]$ and $[-10,20]$, respectively. We use PyTorch \cite{paszke2019pytorch} to train the DNN.

\begin{figure}[t]
\begin{minipage}[b]{0.24\textwidth}
  \centering
  \centerline{\includegraphics[width=0.99\linewidth]{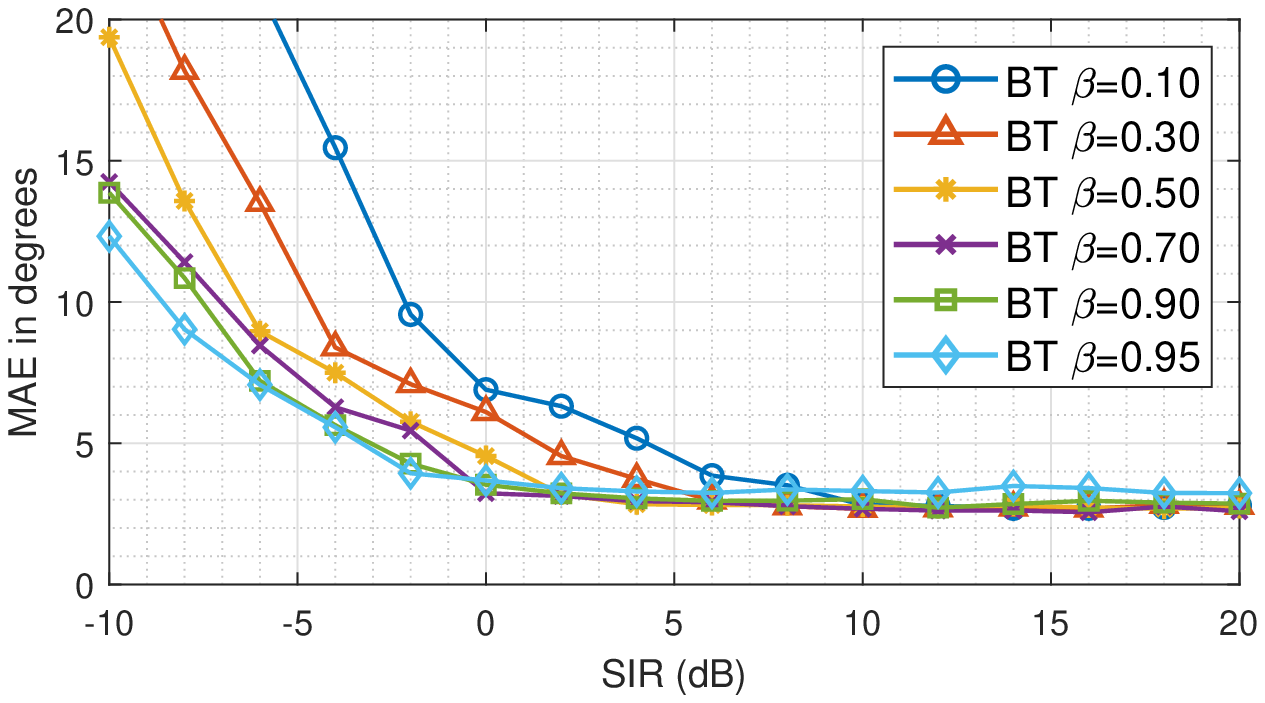}}\smallskip
  \centerline{(a) BT with different $\beta$.}
  \label{fig:post_processing_music_thresholding}
\end{minipage}
\begin{minipage}[b]{0.24\textwidth}
  \centering
  \centerline{\includegraphics[width=0.99\linewidth]{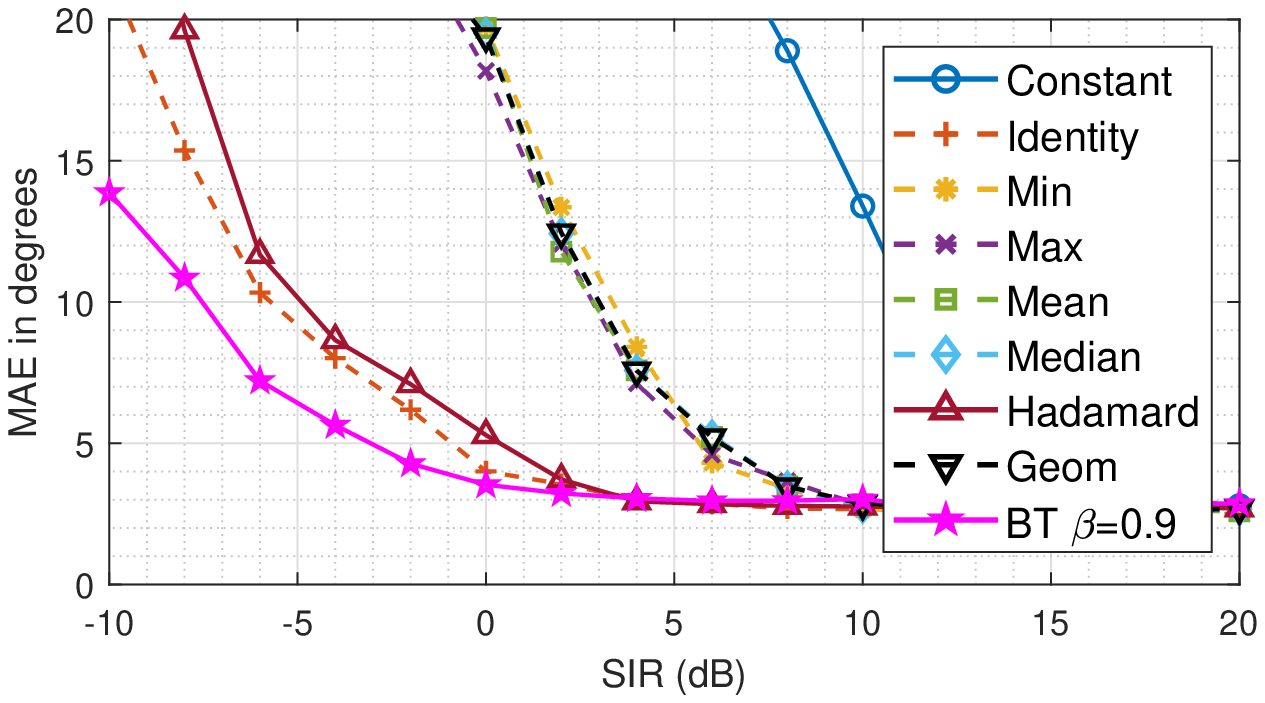}}\smallskip
  \centerline{(b) Overall comparison.}
  \label{fig:post_processing_music}
\end{minipage}
    \caption{MAE in degrees vs. SIR (the same setting as Fig. \ref{fig:post_processing}). Different post-processing functions are evaluated for the DNN based MUSIC (\ref{eq:music_optimization}). ``Constant'' means $w_m(t,f)=1,\forall (m,t,f)$, leading to original sample SCMs (the signal enhancement model is not used).}
    \label{fig:post_processing_music_thresholding_and_all}
\end{figure}
\begin{figure*}[!ht]
\begin{minipage}[b]{0.33\textwidth}
  \centering
  \centerline{\includegraphics[width=0.99\linewidth]{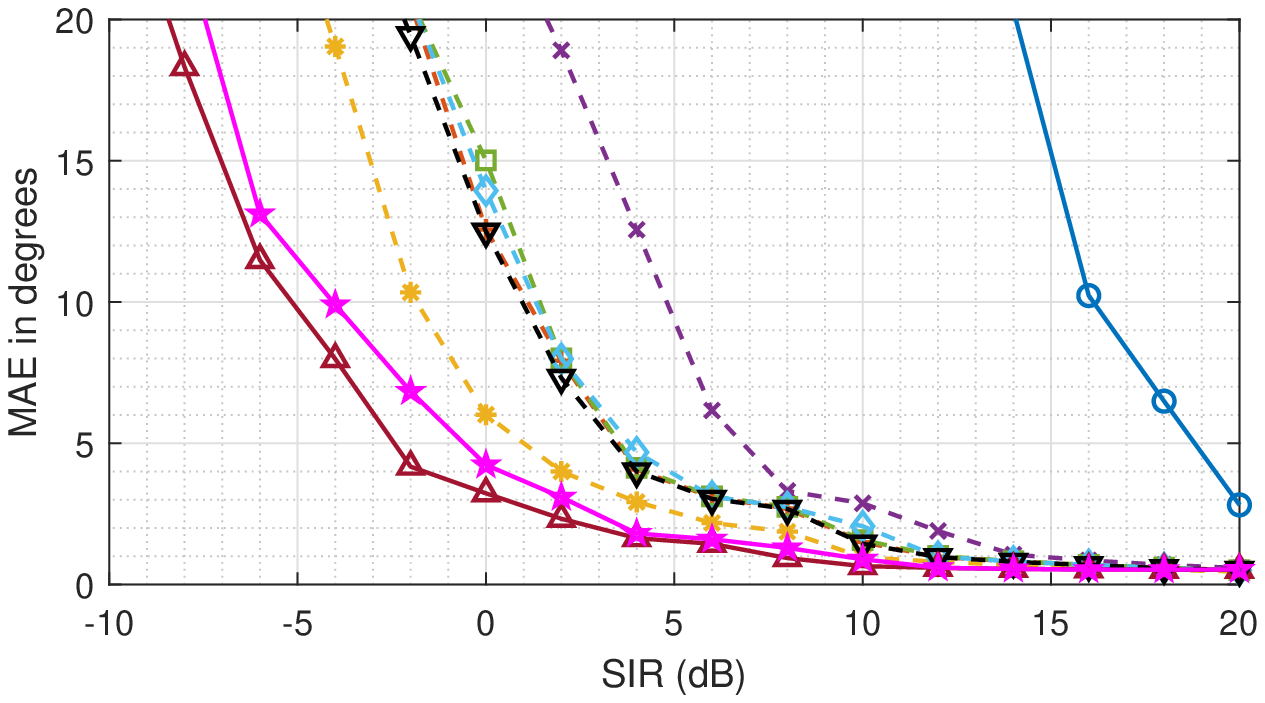}}\smallskip
  \centerline{(a) The proposed method (\ref{eq:proposed_normalized_method}).}
  \label{fig:post_processing_normalized_weighted}
\end{minipage}
\begin{minipage}[b]{0.33\textwidth}
  \centering
  \centerline{\includegraphics[width=0.99\linewidth]{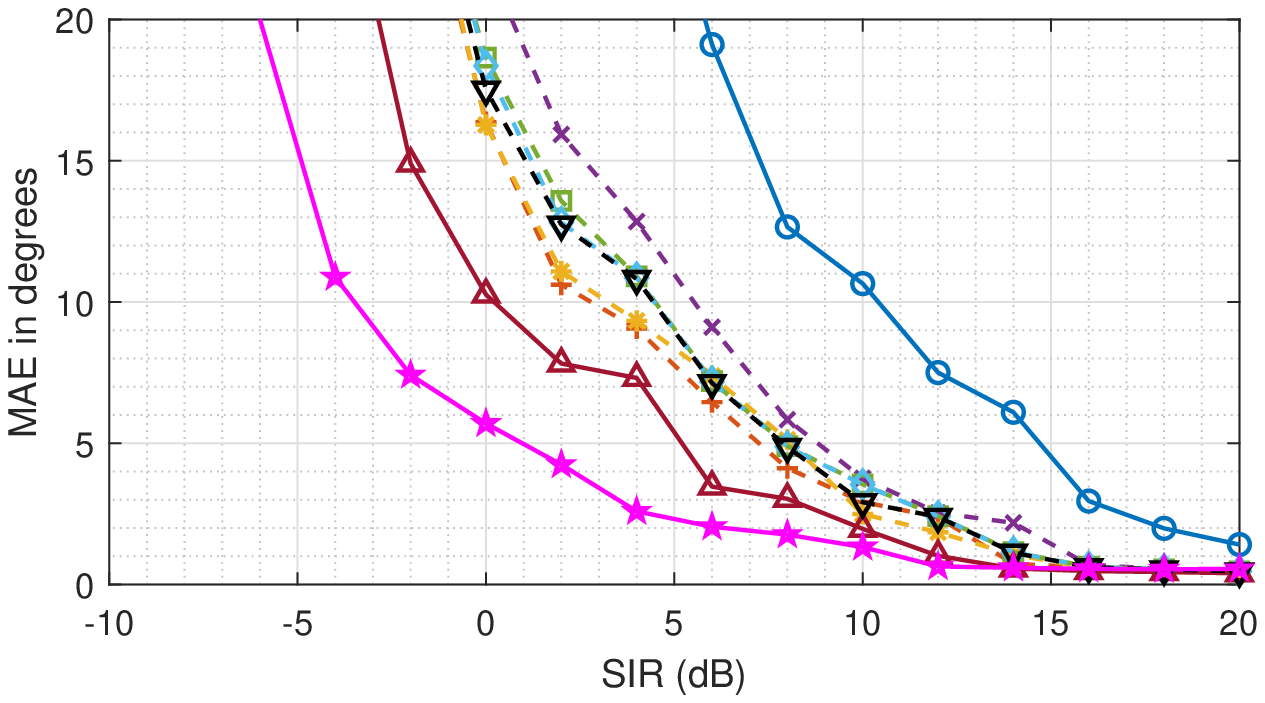}}\smallskip
  \centerline{(b) The principal vector method (\ref{eq:principal_vector_optimization}).}
  \label{fig:post_processing_principal}
\end{minipage}
\begin{minipage}[b]{0.33\textwidth}
  \centering
  \centerline{\includegraphics[width=0.99\linewidth]{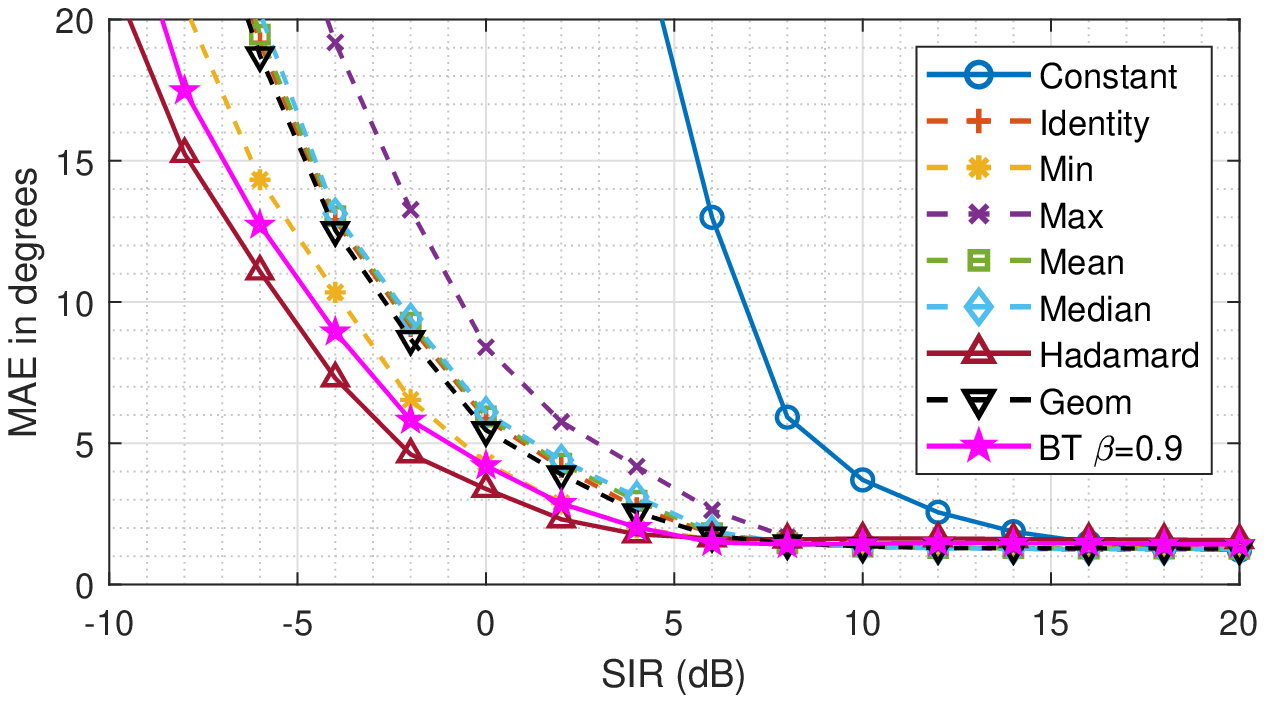}}\smallskip
  \centerline{(c) The SRP method (\ref{eq:srp_method_optimization_max_cov}).}
  \label{fig:post_processing_srp}
\end{minipage}
\caption{MAE in degrees vs. SIR ($K=1$, $\text{RT}_{60}=0.3$s, and $\text{SNR}=20$ dB). Different post-processing functions are evaluated for different DNN based methods. Note that the ranking of these post-processing methods depends on the DoA estimation algorithm (also see Fig. \ref{fig:post_processing_music_thresholding_and_all}).}
\label{fig:post_processing}
\end{figure*}
\begin{figure*}[!htb]
\begin{minipage}[b]{0.33\textwidth}
  \centering
  \centerline{\includegraphics[width=0.99\linewidth]{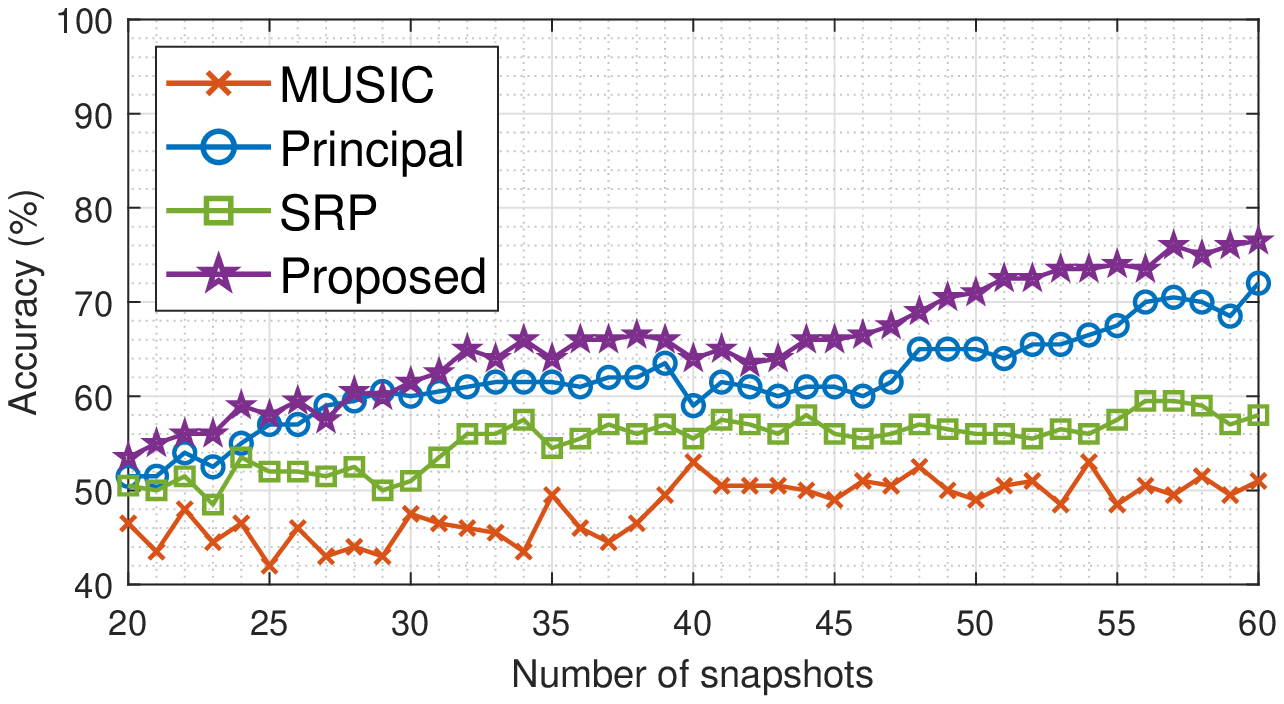}}\smallskip
  \centerline{(a) $-6$ dB SIR.}
  \label{fig:n_snapshots_sir_minus6}
\end{minipage}
\begin{minipage}[b]{0.33\textwidth}
  \centering
  \centerline{\includegraphics[width=0.99\linewidth]{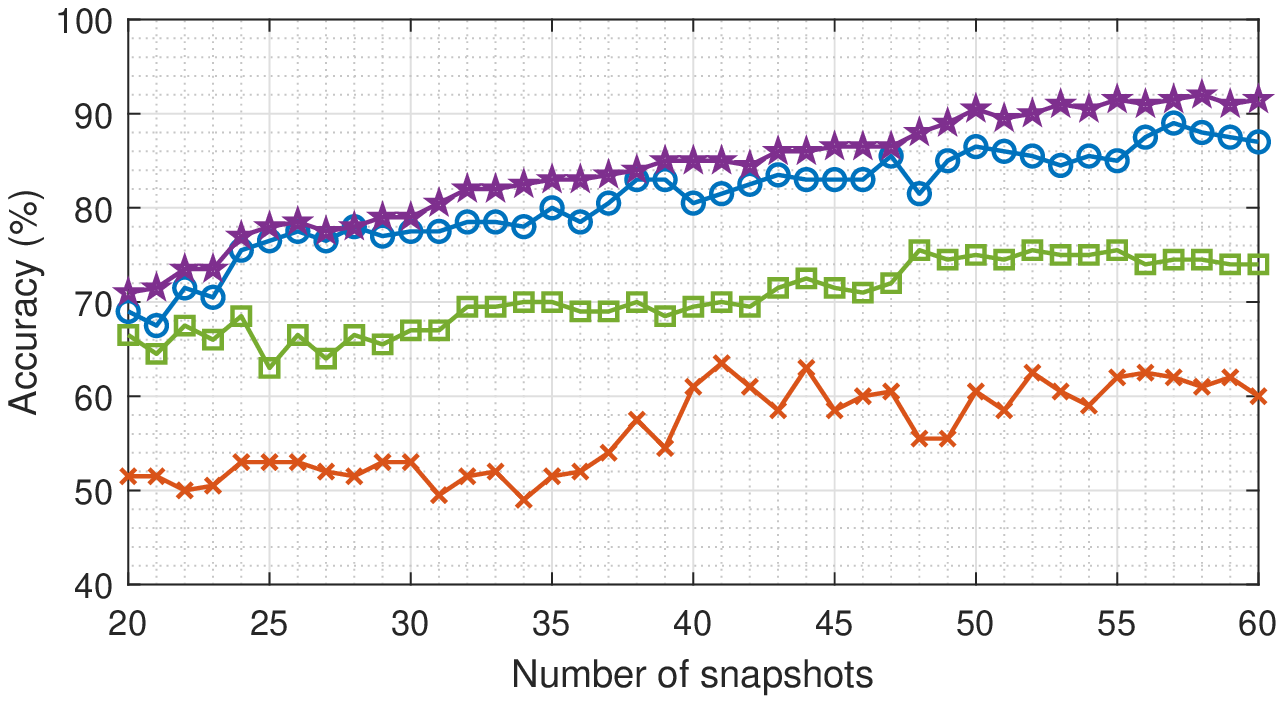}}\smallskip
  \centerline{(b) $0$ dB SIR.}
  \label{fig:n_snapshots_sir_0}
\end{minipage}
\begin{minipage}[b]{0.33\textwidth}
  \centering
  \centerline{\includegraphics[width=0.99\linewidth]{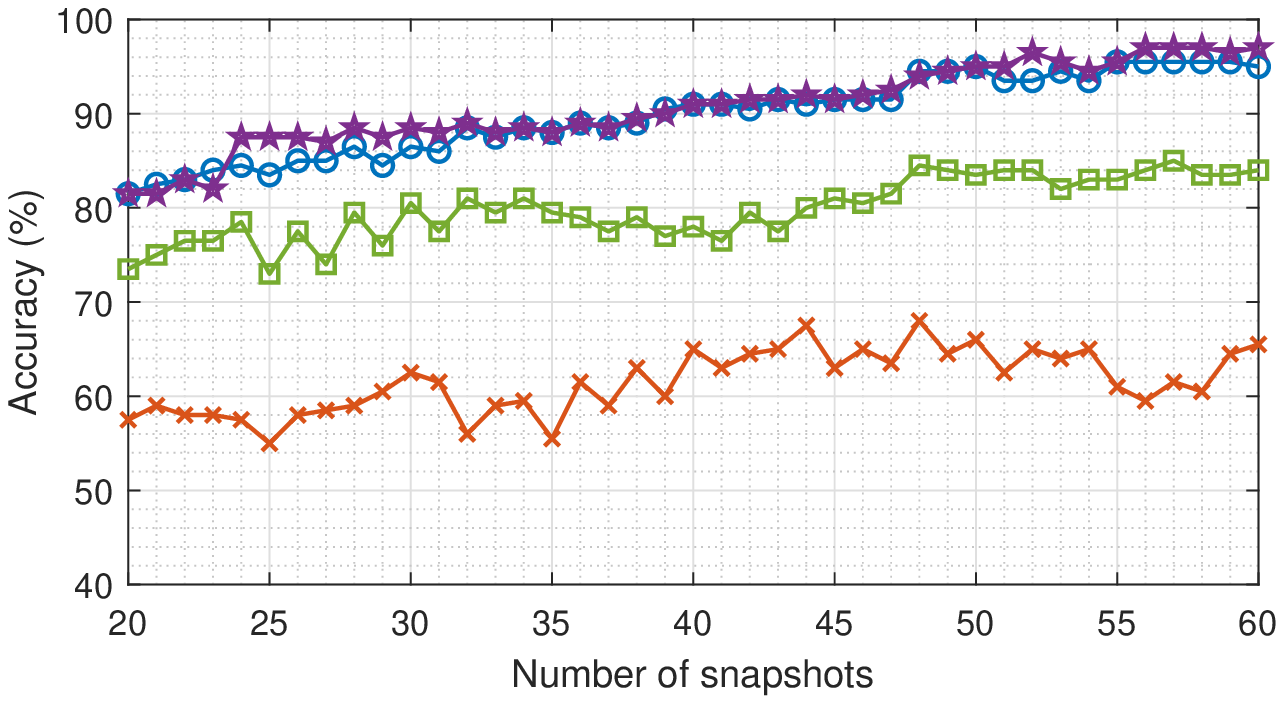}}\smallskip
  \centerline{(c) $+6$ dB SIR.}
  \label{fig:n_snapshots_sir_plus6}
\end{minipage}
\caption{Accuracy vs. number of snapshots $T$ ($K=1$, $\text{RT}_{60}=0.3$s, and $\text{SNR}=20$ dB). The proposed method outperforms all baselines.}
\label{fig:n_snapshots}
\end{figure*}
\noindent\textbf{Room acoustics and other settings:}
A $3$D room with $9.0$m (x-axis), $7.0$m (y-axis), and $3.5$m (z-axis) is used.
A $9$-element rectangular microphone array paralleled to the xy-plane with an equal spacing $0.02$m along the x-axis and y-axis is used.
The center of the microphone array is placed at the center of the room $(4.50\text{m}, 3.50\text{m}, 1.75\text{m})$. The speed of sound is $343$ m/s. For every simulation using specified $\text{RT}_{60}$, SNR, SIR, $T$, $I$, and $K$, we run $C=200$ trials to compute the mean absolute error (MAE) or accuracy. The accuracy is given by the number of successes divided by $C$. A trial is rated as a success if $\abs{\hat{\theta}-\theta_{\text{gt}}}<\theta_{\text{threshold}}=3^\circ$.
In each trial, we uniformly place target speakers and nonspeech interference sources at random in the room simulating a dining environment.
The distance $r$ between a source (target or interference) and the center of the microphone array on the xy-plane is uniformly sampled in the range of $1.0$m to $3.0$m. The height $z$ of the source is uniformly sampled between $1.0$m and $1.8$m. The azimuthal angle $\theta_{\text{gt}}$ (the ground truth DoA) is uniformly sampled between $0^\circ$ and $360^\circ$.
The angle between any two sources is at least $\theta_{\text{min}}=10^\circ$ apart. For every trial, we uniformly sample $I$ clean speech files and $K$ interference files from the test sets. Pyroomacoustics \cite{scheibler2018pyroomacoustics} is used to generate the RIRs based on the image source model \cite{allen1979image} and detect peaks. The resolution in the $1$D grid search is $0.5^\circ$. We assume only $1$ speaker in the room, i.e., $I=1$, but allow multiple interference sources, i.e., $K\geq 1$. $T=50$ if not explicitly specified. $1024$-point FFT is used. Frequency bins corresponding to $50$ Hz to $7$ kHz are used because this is the frequency band of wideband speech coders \cite{cox2009itu}.

\noindent\textbf{Baseline methods:}
The baselines are the DNN based MUSIC (\ref{eq:music_optimization}), principal vector method (\ref{eq:principal_vector_optimization}), and SRP (\ref{eq:srp_method_optimization_max_cov}). We use the same DNN and their best-performing post-processing techniques for comparison.

\noindent\textbf{Post-processing methods:}
We consider every example in Table \ref{tab:post_processing_techniques}.

\subsection{Different Post-processing Techniques for T-F Weights}
Fig. \ref{fig:post_processing_music_thresholding_and_all}(a) shows that a larger $\beta$ in the BT post-progressing gives a smaller MAE in MUSIC, indicating that the performance of DoA estimation can be improved by preserving only very high-quality snapshots. However, a higher $\beta$ reduces the number of rank-one matrices in the WSCM, potentially resulting in a singular WSCM. $\beta=0.9$ and $\beta=0.95$ give similar performance and we chose $\beta=0.9$ as the best parameter for BT. Fig. \ref{fig:post_processing_music_thresholding_and_all}(b) shows the overall comparison of different post-processing techniques. The BT gives noticeable improvements over all the other post-processing techniques for MUSIC. For the proposed normalized T-F weighted method, Fig. \ref{fig:post_processing}(a) shows that the Hadamard product is slightly better than the BT and gives the best performance. Fig. \ref{fig:post_processing}(b) and \ref{fig:post_processing}(c) show that the BT and the Hadamard product are the best-performing post-processing for the principal vector method and the SRP method, respectively. Notice that these results imply that the best post-processing of T-F weights depends on the DoA algorithm. We use the best-performing post-processing for each method in the following investigation.
\subsection{Robustness against a Wide Range of SIRs}
Fig. \ref{fig:spatial_spectrum} compares MUSIC and the proposed normalized T-F weighted method under $\text{RT}_{60}=0.9$s and $\text{SNR}=20$ dB. It shows that MUSIC can be easily misled by the interference and our proposed method is able to accurately localize the speaker in a wide range of SIRs. Table \ref{tab:compare_proposed_methods} shows the accuracy for all the candidate methods under different $\text{RT}_{60}$ and SIRs in an environment that has two interference sources. The proposed method strongly outperforms the other methods in all cases. When the environment is more reverberant, the performance degradation of the proposed method is less than the others.
\begin{table}[t]
    \caption{DoA estimation accuracy. $K=2$. $\text{SNR}=20$ dB.}
    \label{tab:compare_proposed_methods}
    \centering
\setlength{\tabcolsep}{5.5pt} 
\begin{tabular}{rcccccc}
\toprule
$\text{RT}_{60}$ (seconds)& \multicolumn{3}{c}{0.3}&\multicolumn{3}{c}{0.9}\\
SIR (dB)&$-6$&$0$&$+6$&$-6$&$0$&$+6$\\
\midrule
MUSIC (eq. \ref{eq:music_optimization})&$40\%$&$52\%$&$59\%$&$30\%$&$30\%$&$33\%$\\
Principal (eq. \ref{eq:principal_vector_optimization}) &$43\%$&$77\%$&$89\%$&$51\%$&$70\%$&$79\%$\\
SRP (eq. \ref{eq:srp_method_optimization_max_cov}) &$33\%$&$59\%$&$75\%$&$28\%$&$37\%$&$40\%$\\
Proposed  (eq. \ref{eq:proposed_normalized_method}) &$\textbf{54\%}$&$\textbf{81\%}$&$\textbf{91\%}$&$\textbf{59\%}$&$\textbf{76\%}$&$\textbf{88\%}$\\
\bottomrule
\end{tabular}
\end{table}
\subsection{Number of Snapshots}
In contrast to the proposed method, Fig. \ref{fig:n_snapshots} shows that the accuracy of MUSIC is barely improved with more snapshots, which implies that the noise subspace is hardly improved with more snapshots. On the other hand, the principal vector method performs much better, implying that the signal subspace is a better choice here. The SRP method is better than MUSIC but underperforms the principal vector method. Lastly, Fig. \ref{fig:n_snapshots} shows that the proposed method outperforms all baselines. These results hold for a wide range of numbers of snapshots $T$ and different SIRs, demonstrating consistency.
\subsection{A Closer Look at the Proposed Method}
Fig. \ref{fig:proposed_method} focuses on the evaluation of
the proposed normalized T-F weighted method in different conditions
including $\text{RT}_{60}$ and SNRs. In general, the accuracy improves
with a higher SNR or lower $\text{RT}_{60}$. The accuracy can still
approximately achieve more than 80\% for $\text{RT}_{60}=0.9s$ when both
the SIR and SNR are larger than 10 dB.
\begin{figure}[!ht]
  \centering
  \includegraphics[width=0.80\linewidth]{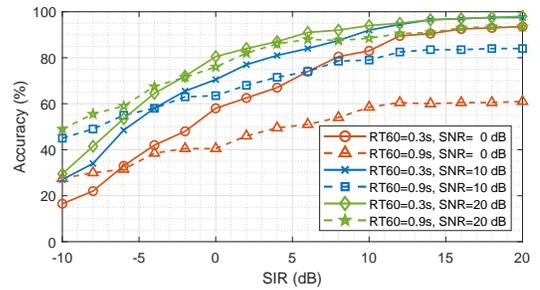}
\caption{Evaluation of the proposed method. $K=2$.}
\label{fig:proposed_method}
\end{figure}
\section{Conclusion}
To improve the robustness of WSCM based DoA estimation, we propose a normalized T-F weighted criterion that normalizes and filters snapshots to prevent the optimization objective from being misled by nonspeech components. Experimental results show that the proposed criterion outperforms MUSIC, the principal vector method, and SRP method in different noisy and reverberant environments that contain nonspeech interference sources. Given that T-F weights are crucial to performance, we also study different post-processing techniques and find that the best design is criterion-dependent.

\bibliographystyle{IEEEbib_abbr}
\bibliography{refs}

\end{document}